\def\BState{\State\hskip-\ALG@thistlm}
\def\BibTeX{{\rm B\kern-.05em{\sc i\kern-.025em b}\kern-.08em
    T\kern-.1667em\lower.7ex\hbox{E}\kern-.125emX}}
\begin{document}

% Copyright
% \setcopyright{none}
%\setcopyright{acmcopyright}
%\setcopyright{acmlicensed}
%\setcopyright{rightsretained}
%\setcopyright{usgov}
%\setcopyright{usgovmixed}
%\setcopyright{cagov}
%\setcopyright{cagovmixed}

\acmDOI{}
\acmISBN{}
\acmPrice{}

% Conference
%\acmConference[SSDBM 2020]{Statistical and Scientific Database Management Conference}{July 7-9, 2020}{Vienna, Austria}
%\acmYear{2020}
%\copyrightyear{2020}

%\setlength{\textfloatsep}{7pt plus 2.0pt minus 4.0pt}

\title{Tracking Where Events Take Place: Reverse Spatial Term Queries on Streaming Data}
%\title{Reverse Spatial Term Queries on Streaming Data}
\author{Sara Farazi}
\affiliation{
  \institution{University of Alberta}
%   \streetaddress{P.O. Box 1212}
  \city{Edmonton}
  \state{Canada}
}
\email{farazi@ualberta.ca}

\author{Davood Rafiei}
\affiliation{
  \institution{University of Alberta}
%   \streetaddress{P.O. Box 1212}
  \city{Edmonton}
  \state{Canada}
}
\email{drafiei@ualberta.ca}

\begin{abstract}
%The large amount of available user-generated data today, can be used in various location-based applications. In this work, we study the problem of efficiently answering two types of reverse spatial queries on streaming geo-tagged data data: 
A large volume of content generated by online users is geo-tagged and this provides a rich source for querying in various location-based services.
An important class of queries within such services involves the association between content and locations.
In this paper, we study two types of queries on streaming geo-tagged data:
1) \textit{Top-k reverse frequent spatial queries}, where given a term, the goal is to find top $K$ locations where the term is frequent, and 2) \textit{Term frequency spatial queries}, which is finding the expected frequency of a term in a given location. 
To efficiently support these queries in a streaming setting,
we model terms as events and explore a probabilistic model of geographical distribution that allows us to estimate the frequency of terms in locations that are not kept in a stream sketch or summary. 
We study the back-and-forth relationship between the efficiency of queries, the efficiency of updates and the accuracy of the results and identify some sweet spots where both efficient and effective algorithms can be developed. We demonstrate that our method can be extended to support multi-term queries. To evaluate the efficiency of our algorithms, we conduct experiments on a relatively large collection of both geo-tagged tweets and geo-tagged Flickr photos.
The evaluation reveals that our proposed method achieves a high accuracy when only a limited amount of memory is given. Also the query time is improved, compared to a recent baseline, by 2-3 orders of magnitude without much loss in accuracy and that the update time can further be improved by at least an order of magnitude under some term distributions or update strategies.
\end{abstract}

\maketitle

\section{Introduction} \label{section:introduction}

% Intro
The share of web traffic generated by mobile devices has been on the rise for the past 10 years and it has passed the 50\% mark in the last four years~\cite{mobileShare2018}.
% source: https://www.statista.com/statistics/241462/global-mobile-phone-website-traffic-share/
A wide range of content is generated by users of such devices (e.g., search logs, microblogging posts and comments, photo uploads, measurements, etc.) and this content is often geo-tagged.
On the other hand, many applications can benefit from harnessing the relationship between content and locations. For example, if the content is treated as events, the relationship may indicate where the events take place.

% Intro (cont)
In a typical setting, where geo-tagged events are streaming in and location-aware queries are processed on the fly, the number of events being recorded and queries being processed can be overwhelming. For example, consider Twitter where 500 million tweets are streaming in every day and the data is queried by 326 million active users of the network~\cite{twitterStat}.
In such settings, the efficiency of maintaining statistics about events while the data is streaming in is crucial, and
any overhead due to querying has to be minimized.
The problem of detecting top-k frequent or trending terms in a spatio-temporal region has been studied and efficient algorithms have been developed~\cite{jensen:topk, garnet, geotrend, venus}. 
In this paper, we study the attachment between events and locations in a
reverse direction, where given a term denoting an event, we want to find locations where the term is frequent.
In particular, we study the problem of efficiently supporting two types of spatial queries: (1) Reverse Frequent Spatial (RFS) queries, where top-K most frequent locations for a given term are retrieved, and (2) Term Frequency Spatial (TFS) queries, where given a term and a geographical location, the frequency of the term in that location is estimated.
Examples of RFS queries are finding top locations where an entity such as ``Trump'' or an event such as a baseball match is trending, or finding top locations where flu symptoms have been frequently reported by users. Examples of TFS queries are estimating the number of users from a specific location who mention ``flu'' or ``vaccine'' in their tweets, or estimating the sales volume of a product in a region, for example, based on the number of tweets from the region that mention the product. 

%use cases
The queries studied in this paper arise in many domains where the association between content and geography is important and this association changes over time. For static or near-static associations (e.g. the spread of restaurants and food types in different locations), the information may be materialized or organized in advance for more efficient access. However, this is not feasible when the content as well as query results are regularly changing. Some specific use cases are querying for the geographical distribution of natural events (e.g. storm, earthquake) in a stream of tweets or hashtags, detecting locations where a product, a political party or a politician name is trending, and finding locations where certain social, cultural or sport event may be happening. The diversity of examples indicates the wide-ranging applications and domains where reverse spatial queries arise or may be useful. For many of these use-cases, the term distribution can change over time with the input, and so is the answer. Hence, an accurate answer that is costly (in terms of the time it takes or the resources it uses) may not be desirable, and an approximate answer with some bounded error may be preferred. Also when the query output is fed into a map visualization tool (which is common in these settings), some degree of error can be tolerated as well.

% example
A typical approach for supporting the aforementioned queries is to use an inverted index that efficiently returns the set of locations where a given event is reported~\cite{almaslukh2018evaluating,zhao2017monochromatic}. One may divide the world map into grid cells and only store the cells, instead of exact locations, in the posting lists, hence cutting the posting list size and the query processing time. However, this approach does not scale well, in terms of space and query and update time, to a large number of events and queries.
As an example, consider the set of tweets generated in the US, and suppose we are interested in locations that cite a particular event. For example, suppose we are interested in locations that mention the term ``hollywood.'' Having detailed (location,event) data, as shown in Figure~\ref{fig:hollywoodmap} for the term ``hollywood'' in December 2017, one can answer both RFS queries such as ``top few locations where the term or event is cited'' and TFS queries such as ``the number of tweets from San Jose that mention the term.''
However, storing such detailed map for a large number of events or terms of possible interest and regularly maintaining the data with incoming tweets is both difficult and resource intensive.
Also maintaining detailed maps does not help with query processing, and it can in fact increase the query runtime. An important problem in this context is 
understanding some of the trade offs between the efficiency of queries, the efficiency of updates and the storage cost for the streaming input.
\begin{figure}[htb]
    \centering
    \includegraphics[width=3.0in]{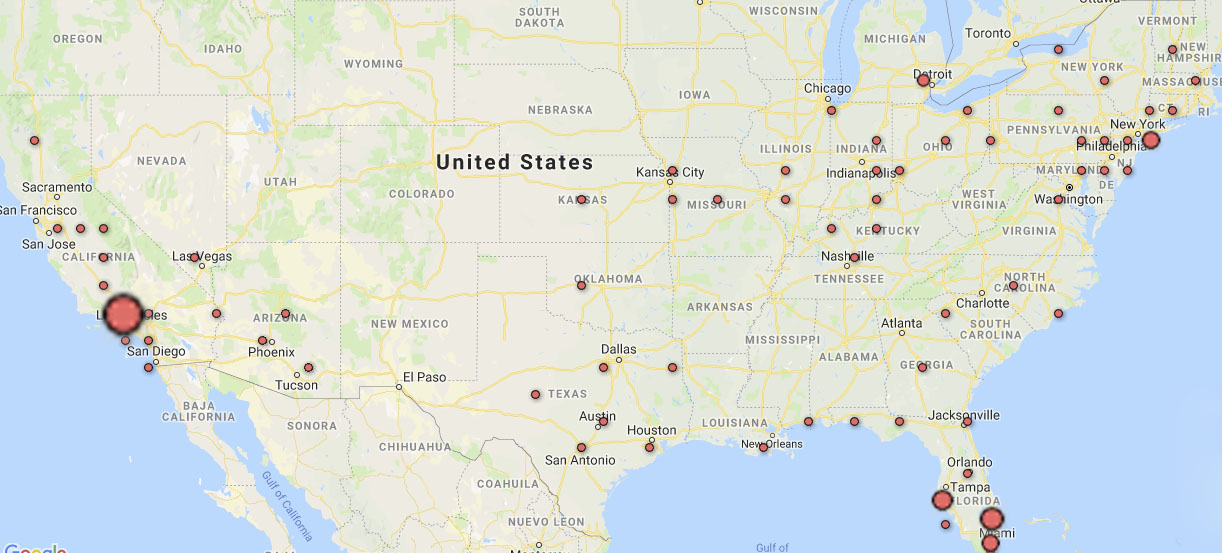}
    \caption{The set of tweets from the US that mention {\it hollywood} in December 2017, projected on the US map.}
    \label{fig:hollywoodmap}
\end{figure}

%\noindent{\bf Problem studied and the outline of our approach}
\subsection{Problem studied and the outline of our approach}
Efficiently evaluating reverse spatial queries on streaming data poses some interesting challenges which have not been addressed in the literature. First, there are both space overhead and update cost for maintaining locations for each term or event. 
The space usage is the product of the number of different terms and locations, 
which can be costly when the number of possible terms is large or the data is kept for all possible locations.
Second, events are not uniformly distributed in all locations, and when an event is frequent in a location, it is more likely to be frequent in the neighbouring locations as well.
Ignoring the  locality relationships
(e.g., containment, proximity, adjacency) can impact both the query processing and update time as well as increasing the space overhead.
%(see Section~\ref{section:baseline}). 
Third, assuming that the frequency data is not recorded for all locations (e.g., due to the cost or the overhead), accurately answering TFS queries for locations where the frequency data is not recorded is another challenge.
The literature on frequent item counting in streaming setting can reduce the space overhead and per insert processing cost while introducing some bounded error.
However, the relevant algorithms are very limited when applied to TFS queries; for example, one cannot efficiently estimate the
frequency of a term at a specific location for which the data is not collected.

Instead of maintaining a detailed map of events, one may keep some statistics about the data distribution with
some desirable properties such as bounded error for queries, efficient update time and efficient query processing time. For example,
for the tweets that mention ``hollywood'' in the US, one may keep information about a few hot spots where the event is frequent and some parameters that show how other data points are distributed both around the hot spots and the areas away from the hot spots. One such map is shown in Figure~\ref{fig:hollywood3centers}.
An interesting question is what statistics can be maintained such that (1) TFS and RFS queries can be answered both efficiently, (2) the accuracy of queries, compared to the case when all data points are used, is not much affected, and (3) the statistics can be efficiently updated in a streaming setting.

\begin{figure}[htb]
    \centering
    \includegraphics[width=3.0in]{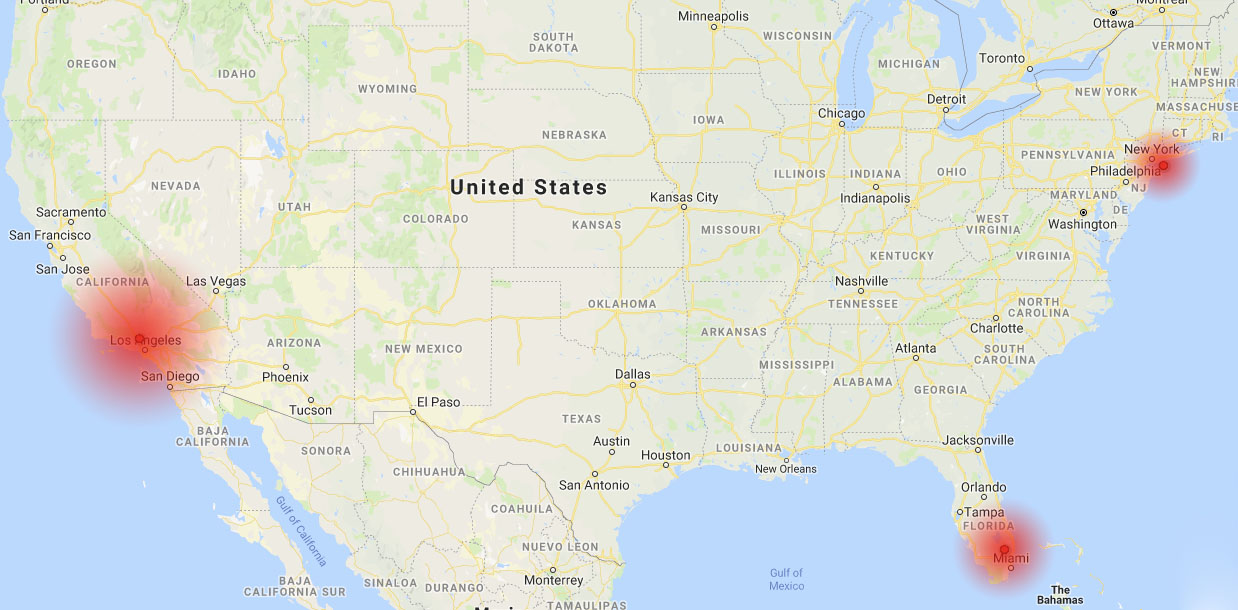}
    \caption{Three hot spots where the term {\it hollywood} is frequently mentioned and the distribution of the event around those hot spots.}
    \label{fig:hollywood3centers}
\end{figure}       

This paper studies the relationship between the efficiency of queries and updates and the accuracy of the results and proposes a new algorithm and a few update strategies that strike a balance between efficiency, accuracy and space usage. 
The proposed algorithms use a counter-based summary~\footnote{Counter-based techniques are more commonly used for point queries because of their relative error guarantees (e.g., \cite{jensen:topk,garnet}).} to maintain top-k frequent locations for each term and a probabilistic spatial distribution model to estimate term frequencies in every location on the map. Hence the frequency can be estimated for locations that are not explicitly kept in the summary.
Using the model, on the other hand, allows us to reduce the space usage of our algorithms to a level that is sufficient to estimate the model parameters.
RFS queries can be directly answered using the summary, whereas for TFS queries, having the total frequency of query terms and the probability from the spatial model allows us find the expected frequency of terms in unsaved locations.

\noindent{\bf Contributions}
The contributions of the paper can be summarized as follows.
First, we introduce a probabilistic model for answering TFS queries, and an efficient algorithm for estimating the model parameters in a streaming setting. 
To the best of our knowledge, this is the first time the model is used in a streaming setting and with only partial information about location frequencies.
Second, we show how the model parameters can be estimated using a counter-based summary and further propose a ring based summary for efficiently answering both RFS and TFS queries.
Third, we develop update strategies to further improve the stream processing time while trading a bit of accuracy in some cases.
Finally, we conduct extensive experiments to evaluate our algorithms and models under different settings and parameter values.

The remainder of the paper is organized as follows:
A counter-based summary augmented with a probabilistic model is presented 
in section~\ref{section:baseline}, and a new algorithm is discussed in Section~\ref{section:ring_summary}. Multi-term queries are discussed in Section~\ref{multi} and the improvements and extensions to our proposed method are presented in Section~\ref{section:possible_improvements}. Our experimental evaluation and our algorithm settings are further discussed in Section~\ref{section:experiments}. Section~\ref{section:related} reviews the related work and Section~\ref{section:conclusion} concludes the paper.

\section{Tsum with a Probabilistic Model}
\label{section:baseline}

Suppose the world map is divided into a uniform grid of pre-determined cell size \cite{grid}, and we associate all events tagged with locations inside a cell to the cell. The size of a cell, given as a parameter, indicates the resolution at which location information is kept.
One may also consider instead the set of locations at a fixed level of a location hierarchy (aka gazetteer). 
The resolution or the level at which the location information is maintained may be set based on the query workload and the level or resolution sought in queries. 
With the frequency of events maintained at a fixed granularity or level of a location hierarchy (e.g., at the level of city, or $1^{\circ} \times 1^{\circ}$ grid cell), the frequency can be computed at higher levels (e.g., state level) by estimating the frequency for all locations that are part of a higher level entity and adding those frequencies. Similarly, the frequency can be computed at lower levels of a hierarchy (e.g., a neighbourhood in a city) under some assumption about the data distribution in a location or grid cell (e.g. uniformity).

In our baseline solution, a stream summary for each term is maintained as a Tsum~\cite{FaraziRafiei19a}, which is a counter-based frequent item counting algorithm to keep track of the top frequent locations for each term.

\noindent\textbf{Tsum structure}
A Tsum consists of a set of locations (denoted by cell Ids) where the term is observed, the frequency of the term in those locations $(f)$ and a frequency error $\Delta$, which is the maximum error in the value of $f$. The summary size (or the number of counters) is set based on the error threshold that can be tolerated.

\begin{lemma}
\label{thm:sizeBound}
Given a stream of locations for a term, one needs a Tsum of size at least $m=\frac{1}{\epsilon}$ to find all $\epsilon$-frequent locations (i.e., locations that appear at least $\epsilon N$ times in a stream of length $N$).
\end{lemma}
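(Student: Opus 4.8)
The statement is a lower bound: it asserts that no Tsum with fewer than $1/\epsilon$ counters can be guaranteed to report every $\epsilon$-frequent location, so at least $1/\epsilon$ counters are \emph{necessary}. The plan is therefore an adversarial counting argument, distinct from (and complementary to) the algorithmic fact that $1/\epsilon$ counters also \emph{suffice}. First I would record the elementary extremal observation that drives the whole bound: since every $\epsilon$-frequent location contributes at least $\epsilon N$ occurrences to a stream of total length $N$, the number of distinct $\epsilon$-frequent locations is at most $N/(\epsilon N)=1/\epsilon$. This both identifies $1/\epsilon$ as the right target (a summary never has to report more than this many locations) and tells us how to construct a hard instance that attains it.

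Next I would exhibit an explicit worst-case stream. Assuming for cleanliness that $1/\epsilon$ is an integer (the general case only introduces floors), take $N$ divisible by $1/\epsilon$ and build a stream consisting of exactly $1/\epsilon$ distinct locations, each occurring exactly $\epsilon N$ times. By definition, every one of these $1/\epsilon$ locations is $\epsilon$-frequent, so any correct answer must list all of them.

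The core of the proof is then a pigeonhole step tying ``must be reported'' to ``must be stored.'' A Tsum of size $m$ holds at most $m$ triples $(\text{location}, f, \Delta)$, and every location it can return is drawn from this stored set: a location that was never allocated a counter does not appear in the structure and cannot be output. Hence if $m<1/\epsilon$, then on the instance above the summary stores at most $m<1/\epsilon$ of the $1/\epsilon$ frequent locations, so at least one $\epsilon$-frequent location is missing from the output and the query fails to return all of them. Taking the contrapositive, finding \emph{all} $\epsilon$-frequent locations forces $m\ge 1/\epsilon$, which is the claim.

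The step I expect to require the most care is the structural justification that the reported set is confined to the stored entries, i.e., that a counter-based Tsum cannot ``infer'' a frequent location it never inserted or has evicted; this is where the argument must invoke the specific definition of the Tsum (a reported location is one whose stored $f$, adjusted by the recorded error $\Delta$, clears the threshold) rather than treating the summary as an arbitrary function of the stream. A secondary, minor point is the non-integer case: when $1/\epsilon$ is not an integer, the same construction yields $\lfloor 1/\epsilon\rfloor$ locations that are each still $\epsilon$-frequent, and the identical pigeonhole conclusion gives $m\ge\lfloor 1/\epsilon\rfloor$, matching the stated bound up to the integrality of the counter count.
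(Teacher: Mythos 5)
Your proof is correct, and in fact the paper states this lemma \emph{without} any proof at all, so there is nothing to compare against directly; your argument supplies exactly the justification the paper leaves implicit. The adversarial construction (a stream over exactly $1/\epsilon$ distinct locations, each occurring exactly $\epsilon N$ times, so that all of them are $\epsilon$-frequent) combined with the pigeonhole step is the canonical lower-bound argument from the counter-based frequent-items literature, and it is the natural companion to Lemma~\ref{thm:errorBound}: with $m$ counters the SpaceSaving error satisfies $\Delta \le N/m$, so $m = 1/\epsilon$ counters also \emph{suffice}, making your bound tight. You also correctly isolate the one step that genuinely needs justification, namely that the reported set is confined to the stored entries: this is legitimate here precisely because the lemma is about Tsum specifically, a counter-based structure whose state is a set of at most $m$ triples $(\text{location}, f, \Delta)$, so a location never allocated a counter (or evicted) simply is not available to output. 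Had the lemma claimed a lower bound over \emph{arbitrary} summaries of the stream, a pigeonhole over counters would not apply and an information-theoretic argument would be needed; your restriction to the Tsum class keeps the argument sound. The handling of non-integer $1/\epsilon$ via floors is a fair and standard reading of the stated bound.
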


Tsum uses the SpaceSaving~\cite{space_saving} algorithm to track top frequent locations. According to this method, each counter keeps record of a cell Id, its frequency, and the error in frequency. The frequency and error values are all zero at the beginning. Each incoming item is recorded in counters until there is no empty counter left. Then if a new incoming item does not exist in the current set of counters, an item with the lowest frequency is selected to be replaced with the new item. The value of the counter is incremented and the previous value of the counter is set as the current error value.

For example, given the stream \texttt{$<$columbus,dallas,dallas,new york,columbus,dallas,chicago$>$} and space to store only $3$ locations, the Tsum method will initially have \texttt{(dallas,3,0),  (columbus, 2,0), (new york,1,0)} where the second and the third entries for each location is its frequency and error respectively. Upon seeing `chicago', since there is no room in the summary, the location with the least frequency is replaced with `chicago' and its error is increased to 1. The new term summary will have \texttt{(dallas,3,0), (columbus,2,0), (chicago,2,1)}.

% Algorithm~\ref{alg:spacesaving} shows the detail of the Tsum method for tracking the top frequent locations.

% \begin{algorithm}
% \caption{Update(location l, Tsum S, size m)}\label{alg:spacesaving}
% \begin{algorithmic}[1]
% % \Procedure{update}{}
% \If {$l \in S$} 
% \State $ S[l].f \gets S[l].f + 1$
% % \EndIf
% \Else  
% \If {$S.size < m$}
% \State $S[l].f \gets 1$
% \State $S[l].\Delta = 0$
% \Else
% \State $i \gets findMinCounter()$ 
% \Comment{index of min counter}
% \State $S[l].\Delta \gets S[i].f $
% \Comment{max error}
% \State $S[l].f \gets S[i].f + 1$
% \Comment{new count}
% \State $S.delete(i)$
% \EndIf
% \EndIf
% % \EndProcedure
% \end{algorithmic}
% \end{algorithm}

% The time cost of a look-up in Step 1 is $O(1)$ if the locations are hashed; finding the minimum counter can take $O(m)$ but $m$ is expected to be small which can be treated as a constant. Memory usage for each term depends on the summary size $m$ which is an input of the system. The algorithm, stores only a list of $m$ counters, and therefore has a memory usage of $O(m)$ per term.

\noindent
{\bf Error bound}
Tracking the top frequent locations with SpaceSaving gives an error bound on the frequency of a term in each location. 

\begin{lemma}
\label{thm:errorBound}
For a stream window of size $N$ and a Tsum of size $m$, the error $\Delta$ cannot exceed $\frac{N}{m}$.
\end{lemma}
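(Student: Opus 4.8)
The plan is to combine a counting invariant of SpaceSaving with a simple averaging (pigeonhole) argument. First I would fix an arbitrary instant during the processing of the window, after some $N' \le N$ items have been consumed, and establish the conservation invariant that the sum of the frequency fields $f$ over all $m$ counters equals $N'$. This follows by induction on the number of processed items: the first $m$ distinct cell Ids each initialize a counter with frequency $1$, and every subsequent item triggers exactly one increment — either of an existing counter (when its cell Id is already tracked) or of the minimum counter (when it is evicted to admit a new cell Id). In both cases the total rises by exactly one, matching the growth of $N'$.

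Next I would bound the minimum frequency. Since the $m$ counters hold frequencies summing to $N'$, the smallest of them, call it $f_{\min}$, cannot exceed the average, so $f_{\min} \le N'/m \le N/m$. I then relate the error $\Delta$ to $f_{\min}$. By the update rule, a counter's error field is only ever (re)assigned at the moment its cell Id is admitted by eviction, and at that moment the assigned error equals the evicted counter's frequency, which was the minimum at that time. Hence every error value is of the form ``$f_{\min}$ taken at some earlier step,'' and each such value is at most $N/m$ by the averaging bound.

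The step I expect to be the main obstacle is arguing that the bound persists over the whole window rather than holding only instantaneously at the moment of assignment. For this I would show that $f_{\min}$ is non-decreasing over time: incrementing a non-minimum counter cannot lower the minimum, and an eviction replaces the minimum counter (value $f_{\min}$) by one of value $f_{\min}+1$ while leaving the others untouched, so the post-step minimum is at least the old $f_{\min}$. Consequently an error that was set to the minimum at an earlier step is still dominated by the current minimum, and therefore by the current average $N'/m \le N/m$. Chaining the three pieces yields $\Delta \le f_{\min} \le N/m$, as claimed.
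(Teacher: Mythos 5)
Your proof is correct: the conservation invariant $\sum_i f_i = N'$, the averaging bound $f_{\min}\le N'/m\le N/m$, and the observation that each error field is assigned the then-current minimum at eviction time together give $\Delta\le N/m$; this is the standard SpaceSaving argument of Metwally et al., which the paper itself states without proof (the monotonicity of $f_{\min}$ you add is sound but not strictly needed, since an error field is never modified after assignment and is already bounded by $N'/m\le N/m$ at the moment it is set).
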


\noindent
{\bf Queries}
Since the top frequent locations for each term are kept in the term summary, RFS queries can be answered by getting the first K counters in Tsum (assuming K is not larger than the summary size). The term frequency in the top-K locations can also be returned and the frequency error can be bounded based on the value of $\Delta$ and as for Lemma~\ref{thm:errorBound}. 
For answering a TFS query \texttt{($t,l$)}, we can return the observed frequency of the given location $l$ if it is stored in the Tsum of term $t$. A question is, how to answer the query if $l$ is not stored in the term Tsum.

\begin{lemma}
Given a Tsum and a location $l$ which is not in the Tsum, let {\u f} be
the smallest frequency in the Tsum. {\u f} gives an upper bound on the frequency of $l$.
\end{lemma}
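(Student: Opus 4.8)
The plan is to exploit the SpaceSaving invariant that underlies the Tsum structure: the smallest stored frequency is a running upper bound on the count of \emph{every} location not currently monitored. Write $f_{\min}$ for the smallest frequency held in the Tsum (the quantity denoted {\u f} in the statement), and let $f_l$ denote the true number of occurrences of the unmonitored location $l$ seen so far in the stream. The goal is to establish $f_l \le f_{\min}$.

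The core of the argument is an induction on the number of processed stream items, carrying the invariant: after every insertion, the minimum stored frequency upper-bounds the true frequency of each location not held in the Tsum. For the base case, while empty counters remain, every location not yet monitored has never appeared (each arriving item is inserted as long as there is room), so its true count is $0$ and the claim is immediate. For the inductive step I would examine the three ways SpaceSaving handles an incoming item $x$: (i) $x$ is already monitored, so only its counter is incremented, no location leaves the summary, and the minimum can only stay the same or grow; (ii) $x$ is new and a free counter exists, which again evicts nothing and cannot decrease the minimum; and (iii) $x$ is new while all counters are full, which is the decisive case.

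In case (iii) I would invoke the replacement rule: the counter currently holding $f_{\min}$ is overwritten by $x$ and reset to $f_{\min}+1$. Two checks are needed. First, the newly evicted location keeps the bound, because by the inductive hypothesis its true count was already at most the old minimum, and the new minimum is no smaller---every surviving counter was $\ge f_{\min}$ and the replaced one became $f_{\min}+1$, so the minimum is \emph{monotonically non-decreasing}. Second, for any other unmonitored location the true count is unchanged while the minimum does not shrink, so its bound is preserved. This closes the induction, and applying the invariant at query time to the location $l$ yields $f_l \le f_{\min}$, which is the claim.

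The main obstacle is precisely case (iii): one must argue simultaneously that the evicted location remains correctly bounded \emph{and} that the minimum never drops below a bound already granted to some other unmonitored location. Both reduce to the single observation that replacing the minimum counter (value $f_{\min}$) by a counter of value $f_{\min}+1$ leaves the new minimum at least $f_{\min}$. Making this monotonicity explicit is what keeps the induction intact; the remaining cases are routine bookkeeping.
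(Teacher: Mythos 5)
The paper states this lemma without any proof of its own---it is inherited directly from the guarantees of the SpaceSaving algorithm on which Tsum is built---so there is no in-paper argument to compare against. Your induction over the stream is the standard way to establish the guarantee, and your case analysis (monitored item, free counter, full summary with eviction) together with the observation that the minimum stored frequency is monotonically non-decreasing is exactly the right skeleton.

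There is, however, one step that does not follow from the invariant as you state it. Your inductive hypothesis speaks only about locations \emph{not} held in the Tsum, yet in case (iii) you bound the true count of the \emph{evicted} location, call it $y$, by the old minimum ``by the inductive hypothesis.'' Before eviction $y$ was monitored, so your hypothesis says nothing about it. What you actually need there is the companion SpaceSaving invariant that every monitored counter overestimates the true count of the location it holds, i.e.\ $c_y \ge f_y$ for the counter value $c_y$; since $y$ held the minimum counter, this gives $f_y \le f_{\min}$. The clean fix is to carry both invariants through the induction simultaneously: the overestimation property is established for a newly inserted location at the moment of replacement (its true count just before insertion was at most the old minimum by your unmonitored invariant, and its counter is set to the old minimum plus one), and it is preserved under increments because the counter and the true count both grow by one on each subsequent occurrence. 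With that strengthening your case (iii) closes and the argument is complete; the rest of your proof needs no change.
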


The lemma provides some comfort in giving an upper bound estimate for TFS queries but the estimate can be too high especially if top locations in the summary are quite frequent. A question is if a more accurate estimate can be obtained. We address this using a probabilistic term distribution model, as presented next. 

\subsection{A probabilistic model of term distribution}\label{locality}

To estimate the frequency of locations that are not stored, we need a model of term distribution based on the information that is stored in the summary. Following the spatial variation model of Backstrom et al.~\cite{spatial:variation},
we assume each term has a {\it center} -- a geographical area where the term is most frequent, and a diffusion -- the extent at which the term is concentrated or dispersed around the center.
This model is quite general and explains many natural phenomena such as the energy of waves and earthquakes in the way they spread and the news about events in the way they happen in a location and the word gets out (see Section~\ref{section:related} for more details).
Based on this model, each term $t$ is characterized by its center,
a constant $C$ which gives the frequency of $t$ at its center and an exponent $\alpha$ which describes how quickly the frequency drops as we move away from the center. $C$ and $\alpha$ are referred to as {\it term focus} and {\it term spread} respectively.
Using this model, the probability that term $t$ is observed at distance $d$ from its center is
\begin{equation} \label{equation:genModel}
p = Cd^{-\alpha}.
\end{equation}
A high value of $\alpha$ for a term means that the term is more local and its usage drops rapidly as we move away from its center whereas a low value of $\alpha$ signals a less local (or a more globally spread) term.
Equation~\ref{equation:genModel} gives a generative model, and clearly other generative models may be used as well. Also the model may be extended to support terms with multiple centers (see Section~\ref{section:possible_improvements}). 

A general method to detect the parameters of a term $t$ is to take a maximum-likelihood approach and find the model parameters based on the observed data.
We assume the center of a term falls in one of the locations where the term is frequent and is stored in the summary. This condition may be relaxed for the cost of an increased search space and the optimization cost.
For a fixed center, it is shown that the log-likelihood as a function of $C$ and $\alpha$ has one local maximum~\cite{spatial:variation}, hence the optimal values of $C$ and $\alpha$ can be obtained using a unimodal multivariate optimization approach.

Let $S$ be the set of all occurrences of term $t$, and $d_i$ be the distance of a location $i$ from the term center.
Then
\begin{equation}\label{equation:likelihood}
f(C, \alpha) = \Sigma_{i\in S}^{}\log Cd_i^{-\alpha} + \Sigma_{i \notin S} \log(1 - Cd_i^{-\alpha})
\end{equation}
is the log of the probability for observing the term in locations in $S$,
as estimated by the model with parameters $C$ and $\alpha$. The optimization
process finds the parameter values that maximize $f(C, \alpha)$.
Figure~\ref{figure:model} shows an example of the optimized model for the term \textit{``bitcoin''} in our dataset of tweets collected from December 1st 2017 to January 1st 2018 in the US. The term center falls in the state of New York.

\begin{figure}
\centering
\includegraphics[width=2.4in,height=1.6in]{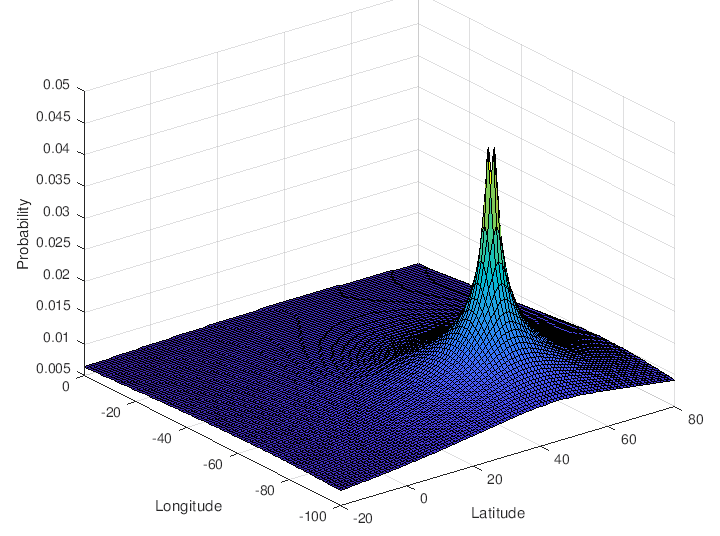}
\caption{Optimized model for term \textit{bitcoin} (Twitter dataset)}
\label{figure:model}
\end{figure}

There are some challenges in using the model in a streaming setting though.
First, solving the optimization problem (as stated above) requires the knowledge of all locations in which the term has appeared as well as all locations where it has not appeared.
Since a Tsum only stores the frequency data for a selected set of locations, a question is if this provides sufficient data to estimate the model
parameters and if the model is accurate.

Second, as the data streams in, the parameters of the model can change. For example, the center can move and the degree of dispersion around the center can vary. For more accurate estimates, the model needs to be updated before answering a TFS query, but running the maximum likelihood for each query can be computationally intensive.
A question is if this can be done efficiently in a streaming setting.
To put the questions together, we want to find ways of balancing the accuracy of the estimates with the efficiency of the results. We address this in the rest of the paper.

With the model parameters estimated for a term $t$, the probability of the term occurrence at any location $l$ of the map can be estimated. Let $p(l)$ denote this probability. Tsum will also give the total term frequency in all locations (see the next lemma). If $f_t$ denote this frequency, then $f_t \times p(l)$ gives the expected frequency of $t$ in location $l$, which can be returned for TFS queries.

\begin{lemma}
\label{lemma:totalFrequency}
Given a Tsum for a term $t$, the sum of all frequency values in the Tsum will give the total frequency of $t$.
\end{lemma}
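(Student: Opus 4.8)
The plan is to argue by a conservation invariant for the underlying SpaceSaving algorithm: every item in the input stream contributes exactly one unit to the collection of frequency counters, so the sum of the stored frequency values always equals the number of items processed so far. Since the stream of locations for term $t$ contains one entry per occurrence of $t$, its length is precisely the total frequency of $t$, and the claim follows by evaluating the invariant at the end of the stream.

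First I would set up an induction on the prefix length $n$ of the stream, with the invariant that after processing $n$ items the quantity $\sum f$ taken over all counters equals $n$. The base case $n=0$ is immediate: before any item arrives, every counter has $f=0$, so the sum is $0$.

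For the inductive step I would examine the three cases SpaceSaving can take on the $(n+1)$-st item, exactly as described earlier in this section. First, if the item's cell Id already appears in a counter, that counter's $f$ is incremented by $1$ and no other counter changes. Second, if an empty counter exists, the item fills it and its $f$ rises from $0$ to $1$. Third, if all counters are occupied and the item is new, the minimum-frequency counter is reused: its cell Id is overwritten, its $f$ is incremented by $1$, and its $\Delta$ is set to the previous frequency. In every case exactly one counter's $f$ increases by $1$ and all others are unchanged, so the sum moves from $n$ to $n+1$, completing the induction.

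The only point that requires care --- and is the closest thing to an obstacle --- is the third case: overwriting a counter could appear to discard the count it previously held. The key observation is that SpaceSaving does not reset the counter to zero when it reassigns the cell Id; it increments the \emph{existing} frequency, and the value that is displaced is recorded in $\Delta$ rather than deducted from the running total. Hence no count is ever lost. Setting $n=N$ in the invariant then yields $\sum f = N$, the total frequency of $t$, as claimed.
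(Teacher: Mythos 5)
Your proof is correct and follows essentially the same argument as the paper's: every occurrence of the term causes exactly one counter's frequency to increase by one (whether by incrementing an existing entry, filling an empty counter, or overwriting the minimum counter without resetting it), so the sum of frequencies equals the stream length, i.e., the total frequency of $t$. Your version merely formalizes the paper's informal case analysis as an explicit induction on the prefix length, and correctly highlights the one subtle point --- that the displaced count in a replacement is recorded in $\Delta$ rather than subtracted from the running total.
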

\begin{proof}
For each occurrence of a term $t$ in the stream, SpaceSaving updates a counter in the term Tsum. This update is either in the form of adding one to the frequency of a location where the term appears (if the new location already exists in the Tsum) or replacing a location with the new location before adding one to the frequency value and updating the error. In both cases, every occurrence of the term is counted, 
and the sum of frequencies gives the total frequency of the term. %\qed
\end{proof}

\noindent
{\bf Tsum+} Consider the scenario where the model parameters are estimated using a Tsum. We refer to this approach as Tsum+, distinguishing it from Tsum which is not using the probabilistic model.
The accuracy of estimates based on Tsum+ is expected to be high when enough counters can be allocated (as shown in our experiments). However the running time of TFS queries may not be acceptable. 
This is because function $f(C, \alpha)$ in Equation~\ref{equation:likelihood} includes a term for every location where the term appears in as well as a term for every location where the term does not appear in. The number of those distance calculations is equal to the number of cells on the map. Also, to find the parameters that optimize $f(C, \alpha)$ in Equation~\ref{equation:likelihood}, one needs to compute the function multiple times. We address this problem in our next approach.

\section{Ring Summary Method}\label{section:ring_summary}

The idea behind the ring summary method (also referred to as Ringsum) is to speed up TFS queries by reducing the number of calculations for cells that are within the same distance of a center. 
This is done by maintaining a summary for locations that have potentials to be centers. In a simple setting, those potential centers are the locations where a desired event or term is frequent. Section~\ref{section:possible_improvements} discusses refinements to the storage of the centers to improve the query processing time and the accuracy.
The summary is expected to maintain the distribution around each center to allow a more efficient evaluation of the queries. 

\noindent
\textbf{Ringsum structure}
% \subsection{Ringsum Structure}
Each center in the summary has a Cell Id denoting a location, a term frequency $f$ in the center, an error value $\Delta$, and a set of ring counters, each counting the occurrences of the term in a ring around the center. As in a Tsum+, Cell Id, $f$ and $\Delta$ keep track of the top-K frequent locations for each term.

We introduce the rings to accelerate the process of finding the spread of a term at each of its centers, as well as to increase the accuracy of spread while using smaller summary sizes. The assumption is that since the spread of a term is determined only based on the number of its occurrences at each distance (according to Equation~\ref{equation:likelihood}), we can approximate the occurrence of the terms in every cell on the map except the center by a distance from the center and a frequency which shows the number of times the term shows up in that distance from its center. 
 
%Finding the center will be the same as before since we still have the center counters in the summary to count the frequent locations.
The stream processing in the presence of ring counters is slightly different though. 
Instead of storing all the occurrences of a term on the map, these new counters will store a summary of the occurrences of the term around each of its centers. 
Figure~\ref{fig:summary} demonstrates the new summary structure.
\begin{figure}
\centering
\includegraphics[height=1.1in]{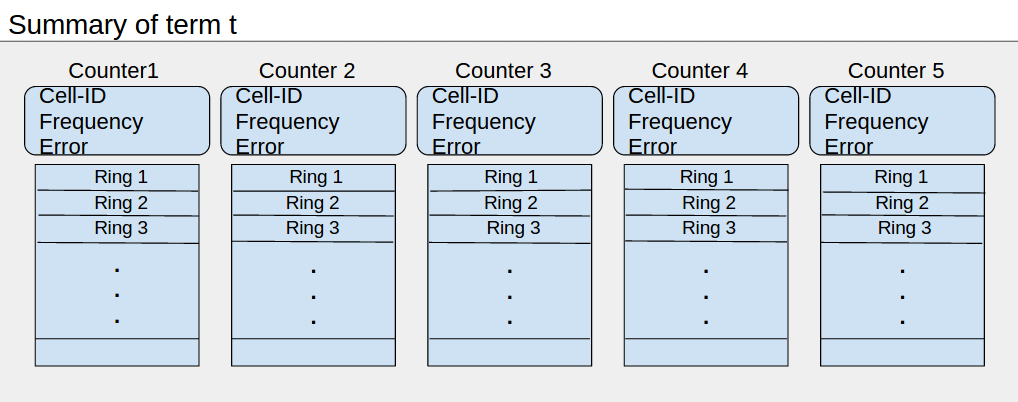}
\caption{The structure of term summaries with ring counters for every center of the term.}
\label{fig:summary}
\end{figure}
The ring counters for a center are updated as follows:
First, we select $R$ distance intervals $\{(0, d_1], (d_1, d_2], (d_2, d_3], \ldots (d_{R-1}, d_{R}]\}$ 
and count the number of occurrences of the term at these distance intervals from its center (see Section~\ref{ring-size} for details of setting distance intervals). 
%(see Section~\ref{ring-size} for a discussion of how the distances are selected). 
The ring counters are all initially set to zero.

For every occurrence of the term, if the distance falls in the $(d_{i-1}, d_i]$ interval, the counter for $(d_{i-1}, d_i]$ is incremented. In other words, we save the number of times the term appears in each distance interval from its center. The ring counters provide an approximation of all the occurrences of the term across the geographical map. As the occurrence of a term falls further away from its center, its effect on the value of spread is lessened. Hence a good approximation of the spread may be obtained using a rather small $R$.

There are some challenges when rings are used instead of cell counters.
First, as the data streams in, the center of a term can change. How can the rings be updated when the center of a term changes?
Second, given $R$ rings, we want to place them around a center such that the estimation error is minimal. What is an optimal placement of the rings around a center? These challenges are addressed in the rest of this section.

%\noindent{\bf Handling changes in potential centers}
\subsection{Handling Changes in Potential Centers}
As the data streams in for a term, a top-K location (treated as a potential center) may be swapped with a new location. We need a mechanism to populate the list of ring counters for this new location.

One approach is to reset the rings, discarding all saved frequencies in each ring, and count only the future occurrences of the term around its new center.
This approach, although straightforward, can result in a large error in the estimated spread $\alpha$ of a term. We discuss this alternative further in Section~\ref{section:possible_improvements}.

Another approach is to estimate the term frequency in each ring around a new center based on the information that exists in the summary for other centers.
%This can be done by calculating the intersection area of the rings around the existing centers and the new center.
If we can assume that the distribution of events in each ring around the center is uniform~\footnote{This assumption clearly does not hold for all centers, especially when part of a ring is not habitable due to land constraints (e.g., when the center of a ring is placed on the coast line or near a mountain). One may keep an index of places or cells where the events of interest cannot happen, and take the area of those regions into account when estimating the frequency of a new center.}, we can say that a proportion of the frequency saved for an old set of rings can be transferred to the new set of rings. This proportion is 
the intersection area of the rings around the existing centers and the new center, calculated as the ratio of the area of intersection between the new rings and the old rings to the whole area of the old rings. Having this intersection ratio and the term frequency in the old rings, we can have an initial frequency estimate for each ring of the new center. 
\begin{figure}
\centering
\includegraphics[height=1.5in]{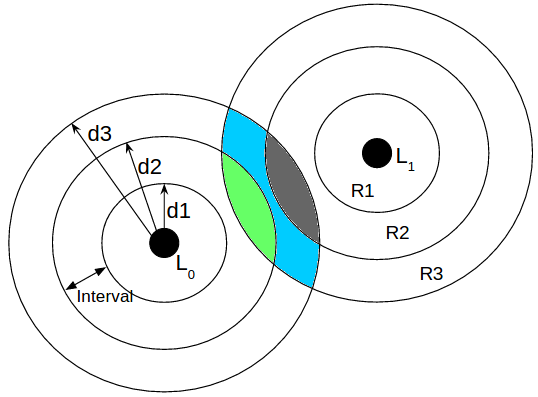}
\caption{Finding the intersection between the new set of rings and the previous set of rings for a replaced center location to initiate the new ring values.}
\label{fig:ringsIntersect}
\end{figure}

Let $L_0$ be an old center and $L_1$ be a new center, and suppose the rings around these centers are as shown in Figure~\ref{fig:ringsIntersect}.
The blue area is the intersection area between R3 from the new center and R3 from the previous center. The grey area is the intersection between R2 from the new center and R3 from the previous center and the green area is the intersection between R2 from the previous center and R3 from the new center. Let's call the grey area A, the blue area B, and the green area C. In this example the initial values for R1, R2 and R3 for the new center $L_1$ will be:
\begin{equation}
\begin{split}
    \phi_{R_{1, 1}} & = 0 \times \phi_{R_{1, 0}}, \\
    \phi_{R_{2, 1}} & = \frac{A}{area(R3)}\times \phi_{R_{3, 0}}, \\
    \phi_{R_{3, 1}} & = \frac{B}{area(R3)} \times \phi_{R_{3, 0}} + \frac{C}{area(R2)} \times \phi_{R_{2, 0}},
\end{split}
\end{equation}
where $\phi_{R_{i, j}}$ is the value stored in the $i$-th ring counter of $L_{j}$, and $area(R2)$ and $area(R3)$ are the areas of respectively the second and the third rings around the old center. The first ring of the new center is initialized to $0$ because there is no intersection between its area and any rings from the previous center.

\begin{algorithm}
\small
\caption{Update(location l, summary S, size m)}\label{ring_update}
\begin{algorithmic}[1]
\If {$l \in S$} 
\State $ S[l].f \gets S[l].f + 1$
\For {i in $\{S - S[l]\}$}
\State $S[i].addToRings(l)$
\EndFor
\Else 
\If {$S.size < m$}
\State $S[l].f \gets 1$
\State $S[l].\Delta = 0$
\For {i in $\{S - S[l]\}$}
\State $S[i].addToRings(l)$
\EndFor
\Else
\State $j \gets findMinCounter()$
\State $S[l].\Delta \gets S[j].f $
\State $S[l].f \gets S[j].f + 1$
\State $S.delete(j)$
\State $S[l].initializeRings(l)$
\For {i in $\{S - S[l]\}$}
\State $S[i].addToRings(l)$
\EndFor
\EndIf
\EndIf
\end{algorithmic}
\end{algorithm}

%\noindent{\bf The update algorithm }
\subsection{The Update Algorithm}
Algorithm \ref{ring_update} gives the steps for updating a summary in the Ringsum method.
The \textit{addToRings()} function finds the distance between a newly seen location on the stream and the center locations in the summary. If the distance falls in any ring around a center, then the frequency count for that ring is incremented. 
The time cost of finding the ring a newly seen location falls into is $O(R)$ if the rings are sequentially scanned and $O(log R)$ using a binary search. 

The \textit{initializeRings()} function finds the intersection between each previous ring and all new set of rings around a new center and calculates the portion of the frequency from the previous set of rings to be transferred as initial values to the new set of rings. The time cost of this function is $O(R^2)$.
The time cost of the update algorithm in the ring summary method depends on the number of center counters, $m$, and the number of rings per center counter, $R$. In the worst case the update time is $O(mR + R^2)$ where $R$ can be treated as a small constant. As for the memory usage, Ringsum  stores $R$ ring counters per center and has a memory usage of $O(mR)$. 

One caveat in estimating the focus and spread using Ringsum is that the summary keeps a count for all cells within distance $d_{i-1}$ and $d_i$ of a center whereas in Equation~\ref{equation:likelihood} we need the actual distance of each cell from the center. A solution is to use the mean distance, i.e. $d_{approx_{i}} = \frac{d_{i-1} + d_i}{2}$, for all cells at ring $i$.
This reduces the distance calculation for ring $i$ to a single term. 

Another caveat in estimating the spread is that we also need the locations where the term does not appear. To reduce the overhead in processing those locations, we can compute an offline table which keeps the number of cells that fall inside each of the rings around a central cell on the world map. With the world map treated as a sphere and using the great circle distance as the shortest distance between two points on the surface of earth, defined as $r \theta$ where $r$ is the radius of the earth and $\theta$ is the central angle in radians between the two points, the number of cells at each distance of a center cell varies with the location of the center. However, one can keep for each cell the number of cells that fall inside each of the rings around the cell at a space cost of $O(c^2 R)$ for a $c \times c$ grid.

%\sara{I edited this part, since the results are now with the great circle distance}
%Even though the Euclidean distance may work for small distances, it is more appropriate to measure the distance between two points on the surface of the earth using the great circle distance, 
%We measure the distance between two points on the surface of the earth using the great circle distance,
%, defined as $r \theta$ where $r$ is the radius of the earth and $\theta$ is the central angle in radians between the two points.
%Since the number of cells at each distance of a center cell varies with the location of the center, one can keep for each cell the number of cells that fall inside each of the rings around the cell. The space cost is $O(c^2)$ for a $c \times c$ grid.
%\davood{Can we explain how this varies? Can we say that the number of cells at each distance is the same for two centers that have the same latitude? If so, we can keep a look-up table giving the number of cells at each latitude.}
%
% ~\footnote{It is easy to relax this condition on the Earth's sphericity and keep track of the number of cells in each ring for each grid cell.}.

Given a center and its ring counters for a term, 
the number of cells at each distance interval from the center where the term does not appear can be obtained by subtracting the ring counter value for that interval from the number of cells in that distance interval (as recorded in the table described above).
%For locations in which the term does not appear, we just need to look up the number of cells in each distance interval (from the table described above), and deduct the ring counter for that interval, which give the number of cells where the term appears.
Suppose our Ringsum has $R$ rings for a center. Let $d_{exp_{i}}$ denote the expected distance of cells in Ring $i$ to the center and $\phi_{R_i}$ be the number of occurrences of the term in Ring $i$. If $T_i$ denote the number of all cells in Ring $i$ (as pre-computed and stored for easy look up), then $f(C, \alpha)$ can be written as:
\begin{equation}
    \begin{split}
    f (C, \alpha) = \Sigma_{i=1}^{R}\log (\phi_{R_i} C d_{exp_{i}}^{-\alpha}) +\\ \Sigma_{i=1}^{R} \log((T_i-\phi_{R_i})(1 - C d_{exp_{i}}^{-\alpha})).
    \end{split}
    \label{equation:ring}
\end{equation}

All variables in Equation~\ref{equation:ring} are known and the equation has only one term for each ring. The number of terms to compute $f(C, \alpha)$ is now decreased to the number of rings, and this drastically reduces the time cost of TFS queries. 

As discussed in Section~\ref{locality}, the number of distance calculations for computing $f(C, \alpha)$ using Tsum+ is equal to the number of grid cells on the map. Given a $c \times c$ grid system and assuming each center has $R$ rings in the summary, the time cost of computing $f(C, \alpha)$ is $O(R)$ for Ringsum and $O(c^2)$ for Tsum+. 
It can be noted that the time cost for TFS queries only depends on parameters $R$, $c$ and the summary size $m$.
%Suppose we have $A$ cells in our grid system and $R$ rings for each center in the summary. The time cost of computing $f(C, \alpha)$ is $O(R)$ for Ringsum and $O(A)$ for Tsum. 
%It can be noted that the time cost for TFS queries only depends on the parameters $R$, $A$ and the summary size $m$.
Also, since $R << c^2$, Ringsum is expected to be much faster. 

\subsection{Setting Ring Distance Intervals }  \label{ring-size}
A question that arises in the ring summary method is how the rings should be placed %around a center 
since both the number of rings and the distances between them can vary.
The number of rings depends on the space constraint, and with a larger number of rings, the spread can be calculated more accurately because the distances registered for the terms inside each ring will be closer to their real distances.

Selecting the sizes of ring intervals is not trivial. Ideally one needs to know the term distribution on the map to allocate the rings. However, the ring sizes are set before the terms are seen on the stream, and the term distribution is less likely to be known in advance. 
Let $D$ be the radius of the largest possible ring around a center, and $r \times D$ be the radius of the next possible ring around the same center where $ 0 < r < 1$. Assuming that the ratio $r$ remains the same between two consecutive rings, the rings can be
defined by fixing $r$ and the number of rings.

One observation that can be made based on Equation~\ref{equation:ring} is that the occurrences in rings that are farther away from the center have little impact on the value of $f(C, \alpha)$, whereas the occurrences in rings that are closer to the center have a bigger impact. 
Therefore, an optimal ring setting is expected to keep more accurate distances for locations that are closer to the center. That means, smaller intervals should be assigned to the rings that are closer to the center and larger intervals should be given to the rings that are farther away from the center.

\begin{lemma} \label{lem1}
Given a circular region with radius $D$ and assuming that the distribution of points in the region is uniform, $\frac{2}{3}\times D$ is the average distance from the center to a point in the region.
\end{lemma}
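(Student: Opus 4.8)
The plan is to recognize the claim as the expected value of the radial distance for a point drawn uniformly over the disk of radius $D$, and to evaluate it directly by integration. First I would fix the density: ``uniform over the region'' means the point has constant density $\frac{1}{\pi D^2}$ per unit area over the disk, since the disk has area $\pi D^2$. Working in polar coordinates $(\rho, \theta)$ centred at the disk's center, the distance from the center is simply $\rho$, and the area element is $\rho\, d\rho\, d\theta$. The quantity to compute is therefore
\begin{equation}
E[\rho] = \frac{1}{\pi D^2}\int_0^{2\pi}\!\!\int_0^{D} \rho \,(\rho\, d\rho\, d\theta).
\end{equation}

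Next I would separate the two integrals. The $\theta$-integral contributes a factor $2\pi$, leaving $E[\rho] = \frac{2}{D^2}\int_0^{D}\rho^2\, d\rho = \frac{2}{D^2}\cdot\frac{D^3}{3} = \frac{2}{3}D$, which is exactly the claimed value. An equivalent route, which I find cleaner to state, is to derive the one-dimensional density of the radial distance first: the probability that a uniform point lies within distance $\rho$ of the center is the area ratio $\frac{\pi \rho^2}{\pi D^2} = \rho^2/D^2$, so $\rho$ has cumulative distribution $\rho^2/D^2$ on $[0, D]$ and hence density $f(\rho) = 2\rho/D^2$; then $E[\rho] = \int_0^{D}\rho\cdot(2\rho/D^2)\, d\rho = \frac{2}{3}D$.

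The computation itself is routine, so the only genuine subtlety --- and the point I would emphasize --- is the weighting. It is tempting to treat the distance as uniformly distributed on $[0, D]$ and conclude $\frac{1}{2}D$, but that ignores the fact that a disk has proportionally more area at larger radii. The linear factor $\rho$ in the area element (equivalently, the radial density $2\rho/D^2$ growing with $\rho$) is precisely what shifts the mean from $\frac{1}{2}D$ up to $\frac{2}{3}D$. Getting this weighting right is the entire content of the argument; once the integral is set up correctly, the evaluation is immediate.
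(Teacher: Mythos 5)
Your proof is correct and takes essentially the same approach as the paper: the paper states the radial density $2x$ on the unit disk and computes $\int_0^1 2x^2\,dx = \tfrac{2}{3}$, which is exactly your second route (up to rescaling by $D$). The only difference is that you explicitly derive the radial density from the area ratio, whereas the paper asserts it directly.
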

\begin{proof}
Consider a unit circle and a set of points that are uniformly distributed in the circle. 
The Probability Density Function (PDF) of 
the points at distance $x$ is 
$2x.\mathbbm{1}_{[0,1]}(x)$
and the expected value of the distance is given by
% \[
$\int_{0}^{1}2x^2dx = \frac{2}{3}$.
% \] 
%\qed
\end{proof}

Suppose the distribution of locations around a term center is uniform. If we allocate only one ring for a center, $\frac{2}{3}D$ is the average distance between the points in the ring and the center. For more than one ring, we may divide the region at where the average falls and set the ratio $r=\frac{2}{3}$, giving the set of radiuses $\{\frac{2D}{3}, \frac{4D}{9}, \ldots\}$
for the rings. In our experiments, we evaluate this along with other settings of $r$ in terms of their error in calculating the spread.

\section{Multi-term Queries}\label{multi}

Consider queries that have more than one term. It is easy to evaluate them if the exact term frequencies in all locations are known. If $f_1$ and $f_2$ denote the frequencies of terms $t_1$ and $t_2$ respectively in a location, then $min(f_1,f_2)$ gives the frequency of both terms in the same location. However, Tsum+ and Ringsum do not keep the frequency of a term in all locations, and only the frequency of top locations are kept. To estimate the joint frequency of a multi-term query, one may assume that the term frequency in every location that is not kept in Tsum+ or Ringsum is zero. This will clearly have error depending on the summary size. Also if the sets of top locations maintained for two terms in a multi-term query are disjoint, the query results in an empty set.
An alternative approach is to use our probabilistic model for each term (as discussed earlier) to estimate a joint probability of occurrence for multi-term queries. 

Given a multi-term query $T = \{t_1, t_2, ..., t_n\}$, an RFS query may find top $K$ locations where the terms all co-occur and the locations are frequent. With the top-K frequent locations maintained for each term (in their Tsum+ or Ringsum), 
%the top locations for each term is readily available. 
and treating those locations as potential centers, the focus ($C$) and spread ($\alpha$) for each of these centers may be estimated; this gives $K$ probabilistic models $P = Cd^{-\alpha}$ per term. Each model will provide the probability of the term occurring in any cell on the map using $p = C d^{-\alpha}$ where $d$ is the distance from the center to the cell.

The join probability in every location on the map can be computed under some assumption. In particular, if we can assume that the appearance of the terms in a multi-term query is independent from each other, then the joint probability can be estimated as the product of the probabilities, i.e.
\begin{equation}\label{joint}
    P(t_1, t_2, \ldots, t_n) = P(t_1) \times P(t_2) \times ... \times P(t_n).    
\end{equation} 
If the occurrence of the terms in a multi-term query is not independent, then the lowest probability can give an upper bound estimate of the probability, i.e.
\begin{equation}\label{min}
  P(t_1, t_2, \ldots, t_n) \leq  min(P(t_1), P(t_2), \ldots, P(t_n)).  
\end{equation}

With the probabilities of query terms estimated for all locations, an RFS query can return the top K locations with the highest probabilities.  
Algorithm \ref{alg:multi-term} gives the steps of answering an RFS multi-term query with the assumption that the terms are independent.

\begin{algorithm}
\small
\caption{QueryProcess(multi-term T, K)}\label{alg:multi-term}
\begin{algorithmic}[1]
% \Procedure{MyProcedure}{}
\For {$t \in T$}
\For {$center_i$ in top-K centers of $t$}
\State $P_i \gets C_i\times d^{-\alpha_i} $
\For {cell in Map}
\State $d_{cell} \gets getDistance(center_i, cell)$
\State $p_t(cell) \gets C_i \times d_{cell}^{-\alpha_i}$
\EndFor
\EndFor
\EndFor
\For{$j=1$ to $K$}
\State $JointProbabilities_j \gets \prod_{t, cell}p_t(cell)$
\State top-j result = $Max(JointProbabilities_j)$
\EndFor

% \EndProcedure
\end{algorithmic}
\end{algorithm}

%% DR. Can we prove that all top K locations will be returned?

Answering TFS queries is similar except that we find the joint probability of the terms only in the given location. The product of this probability with the minimum total frequency of the terms in the multi-term query will give an upper bound estimate of the expected frequency of the multi-term. 

\section{Improvements and Extensions}\label{section:possible_improvements}

The proposed model and algorithms can be improved or extended in few directions. Our first three improvements are geared toward the ring summary method, which despite being very efficient for RFS queries in terms of the query time, has some drawbacks. First, the stream processing cost is higher in the worst case due to the overhead in replacing an existing center with a new one. Second, for a fixed space usage, the accuracy of focus and spread can be lower than that of Tsum+ especially if not enough center counters can be allocated because of the space allocated for rings.

In this section, we first develop three variations of the ring summary method to address the aforementioned issues. Next we discuss the scenario where a term has more than one center.  

\subsection{Fixed-Center Ring Summary}
To avoid the overhead in stream processing when replacing a center, one strategy is to stop updating the centers after a certain volume of the stream passes.
The intuition behind this is that for some terms, the top central locations do not change after certain fraction of the stream passes, and there is not much point in updating the centers that may not end up in top. Our experiments show that the average number of replacements in top-K counters of the summary decreases noticeably after about 20\% of the stream is received (see Section~\ref{sec:exp-improvements} for more details). Also the complex updating process of the rings in the ring summary method, when their center counters are replaced with a new location, may increase the error of spread. Fixing the center locations can result in lower error for the calculated spread if we know that the centers are not going to change anymore. 

\subsection{Light-Update Ring Summary}
Another strategy to avoid the stream processing overhead due to the changes in centers is to set the initial ring counters to zero when a new center is added. The intuition here is that the existing rings cannot contribute much to the rings of a new center if the area of overlap between the rings is small. Under this condition, the contribution of the existing rings may be ignored to save in processing time. 
This strategy can introduce more error in calculating the spread but it will decrease the summary update time drastically. 

\subsection{Proximity-Aware Update}
%This variation of Ringsum aims at avoiding unnecessary updates where a new center is not that far from an old center. 
The idea behind this strategy is that when a center moves back and forth between a set of adjacent grid cells, any of those cells can be a center. Also for two nearby centers, the occurrence data can be very similar, each leading to the same or similar probabilistic model, which is as good as the other. Hence we may avoid unnecessary updates and only keep track of centers that are within an acceptable distance from each other.
This strategy can be implemented by slightly changing our update algorithm.
When all center counters in the summary are full, a new potential center is inserted only if the new location does not fall, for example, within the first ring of other centers. Reducing the number of updates in centers can decrease the error for spread but the centers may not also be very accurate. 

\subsection{Multiple Centers}
The single-center model is sufficiently accurate in capturing the distribution of many terms, but some terms are better modeled using multiple centers. The paper has focused on the simple single-center setting, however, this is for the convenience of the presentation and not due to a limitation of the model or our approach. Our sketches (both TSum and RingSum) keep the frequency stats for all locations where a term is frequent, and any of those locations may be treated as a center. For example, suppose a term has two centers {\it a} and {\it b}, each with its own focus and spread. The probability of observing the term is now defined in terms of two functions, say $p_a$ and $p_b$ (each defined as in Equation~\ref{equation:genModel}). This probability at an arbitrary location of distances $d_a$ and $d_b$ from the two centers can be computed as $p_a(d_a)+p_b(d_b)-p_a(d_a)p_b(d_b)$ assuming independence. This quantity can now be plugged into Equation~\ref{equation:likelihood} to find the model parameters. 

Increasing the number of centers, however, adds overhead to the estimation process. One issue is deciding on the number of centers, which can be subjective
and has similar characteristics as detecting the number of
clusters~\cite{he2010detecting,evanno2005detecting}. In some cases, the number of centers may be given as a parameter (similar to the number of clusters that may be provided in a clustering approach). The number of centers may also be detected on the fly as the data streams in. For example, one may start with one center and monitor the estimation error. When the error reaches a threshold, increasing the number of centers may reduce the error. Increasing the number of centers also adds overhead to the maximum-likelihood optimization. Hence for large number of centers, a clustering-based heuristic may be used~\cite{kumar1999trawling,jeub2015think}. For example, a model may be fit to each center locally before each geographic point on the map is assigned to a center which gives that point the highest probability. 

A deeper study of our model in the presence of multiple centers and different update strategies and exploring areas where the model may be optimized or our query evaluation strategies may be improved is an interesting future direction. 

\section{Experiments}\label{section:experiments}

To evaluate the efficiency and the effectiveness of our algorithms and their robustness to our parameter settings, we conducted a set of experiments. The objective was to evaluate our probabilistic model, built on a stream summary, and the performance of our algorithms and strategies. In particular, we evaluated
(1) the accuracy and robustness of our probabilistic model under different parameter settings (Section~\ref{exp:prob-model-eval}) and ring sizes (Section~\ref{size-selection}); 
(2) the accuracy of single-term and multi-term queries (Sections~\ref{exp:eval-RFS}, \ref{exp:tfs}, and \ref{exp:eval-multiterm}); (3) running time of queries and updates (Section~\ref{exp:eval-runtime}); and
(4) the effects of improvements discussed in Section~\ref{section:possible_improvements} on running time and accuracy (Section~\ref{sec:exp-improvements}).

\subsection{Experimental Setup}
Our experiments were conducted on two datasets: (1) Flickr YFCC100m dataset~\cite{yfccm} and (2) a collection of tweets collected from Twitter.
The Flickr dataset included 100 million photos, of which roughly 60\% had geo-tags. We used terms from the description field and user and machine tags of geo-tagged photos to create our first dataset.
The second dataset included a set of 
53 million geo-tagged tweets, posted from within the US and gathered from Twitter's stream API \cite{streamAPI} from December 1st 2017 to January 1st 2018.
For each dataset, we used a random query set of 1000 terms that were at least 0.02-frequent (i.e. appeared in at least 2\% of the stream). The query terms were randomly chosen from the bag of all terms in the stream, meaning that more frequent terms were more likely to be picked. 
All experiments are done on a machine running Ubuntu 16.04 with 8GB of memory and core i7 Intel CPU, and all our code is available online~\footnote{Our code is available at \url{https://github.com/Sara-Farazi/Top-K-Location-Query}.}.

\subsection{Ring Size Selection} \label{size-selection}
Our first experiment is on the ring size selection and the robustness of our term distribution model to the ring size. It is expected that the more the number of the rings, the more accurate the count estimates and the less the error in estimating the spread. However, having a fixed number of rings, we wanted to know what the ring sizes should be. 
For this experiment, we varied the size ratio $r$ from $0.1$ to $0.9$. 
This gave us a sequence of rings with the radius of each ring in the sequence reduced by a factor of $r$ to give the radius of the next ring.
Since the smallest ring included a $1^\circ$x$1^\circ$ latitude/longitude grid cell and the largest ring included all the cells, the number of rings varied for each $r$.
For example, there were 3 rings for $r=0.1$, 4 rings for $r=0.2$ and 10 rings for
 $r=0.6$; the number of rings were capped at 10 for larger values of $r$.
For our ground truth, the parameters of the probabilistic model were estimated using all occurrences of the terms.
The relative error in spread $\alpha$ are shown in Figure~\ref{fig:ring-size}, where the mean and standard deviation of error are computed for 1000 queries in each dataset. At $r=0.6$, the Flickr dataset has its minimum mean error and
the Twitter dataset is not that far from the minimum. This value of $r$ also is not that far from the ratio $\frac{2}{3}$ discussed in Section~\ref{section:ring_summary}, which means the ring size for both datasets can be set based on our analytical finding, and independent of the dataset being used.
Unless explicitly stated otherwise, $r=0.6$ for the rest of our experiments.
\begin{figure}
\centering
\includegraphics[width=3in,height=2in]{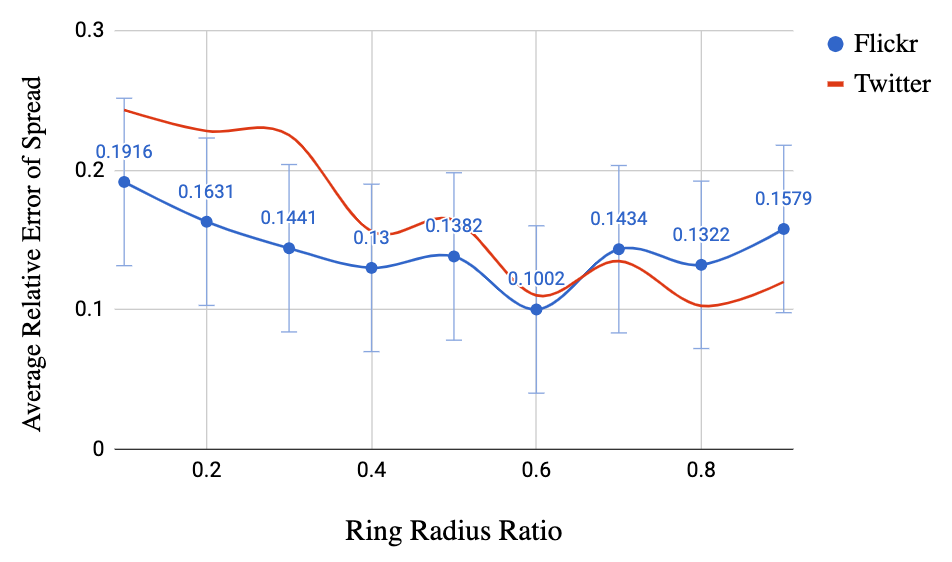}
\caption{Error of spread varying the ring size in Ringsum}
\label{fig:ring-size}
\end{figure}

\subsection{Probabilistic Model Evaluation}
\label{exp:prob-model-eval}
Both TFS and multi-term RFS queries estimate the frequency of a term in an arbitrary location using our probabilistic model, hence their accuracy largely depend on the accuracy of the probabilistic model. In this section we evaluate the probabilistic model under different parameter settings. 
We allocated the same space for both Tsum+ and Ringsum and
compared their relative error of focus and spread and the accuracy of centers. 
For each center, a location id, a term frequency in the location and an error was kept whereas for each ring, a counter was sufficient.
Both the error and the accuracy were measured compared to the case where the model parameters were estimated using all occurrences of the terms. %(as in the work of Backstrom et al.~\cite{spatial:variation}).
Table~\ref{error} shows the results
of this experiment on the Twitter data. The first line with zero rings represents Tsum+, and the other lines show the performance of Ringsum under different parameter settings.
Overall, in our random set of 1000 queries, Tsum+ has less error in focus and spread than Ringsum. Increasing the number of rings and reducing the number of centers reduces the error of focus and spread for Ringsum, for the cost of a reduced center accuracy. Note that the memory size is fixed for a fair comparison with Tsum+, and increasing the ring size entails fewer center counters. 

% The same experiment on the Flickr dataset gives similar results. Although, as we can see the results in Table~\ref{error-Flickr}, the error of spread is relatively lower. This is mainly because terms in the Flickr dataset are more focused around their centers and our method finds more accurate models for such terms. 
The same experiment on the Flickr dataset gives similar results (and for the same reason not reported), though the error of spread is relatively lower. This is mainly because terms in the Flickr dataset are more focused around their centers and our method finds more accurate models for such terms. 

% We used the machine tags in Flickr dataset to create models for different camera brands used to take the pictures of the Flickr data. Table~\ref{cameras} shows the center, focus and spread for each brand. We can see that brands like ``Canon'' and ``Olympus'' have higher spread and lower focus which means they are being used around the world widely, but brands like ``Minolta'', ``Leica'' and ``Fujifilm'' have lower spread and higher focus. These brands are being used more locally around their centers.

\begin{table}[htb] %fixed
\small
\centering
\caption{Accuracy of the probabilistic model for different settings of center counters and rings on the Twitter data}
\begin{tabular}{|c|c|c|c|c|c|} \hline
&&center&\multicolumn{2}{|c|}{relative error} & center\\
$r$&rings&counters&C&$\alpha$ & accuracy\\ \hline
- & 0& 216 &0.161 & 0.028 & 96.7\\ \hline
0.3& 5& 81& 0.215 & 0.203 & 94.1 \\
0.4& 7& 65& 0.188 & 0.138 & 92.2 \\ 
0.5& 8& 59& 0.167 & 0.145 & 91.8\\
0.6& 10& 50& 0.158 & 0.090 & 91.4\\ 
\hline\end{tabular}
\label{error}
\end{table}

% \begin{table} %fixed
% \centering
% \caption{Accuracy of the probabilistic model for different settings of center counters and rings on the Flickr data}
% \begin{tabular}{|c|c|c|c|c|c|} \hline
% &&center&\multicolumn{2}{|c|}{relative error} & center\\
% $r$&rings&counters&C&$\alpha$ & accuracy\\ \hline
% - & 0 & 216 & 0.132 & 0.019 & 97.3\\ \hline
% 0.3 & 5& 81& 0.226 & 0.198 & 92.5 \\
% 0.4 & 7& 65& 0.191 & 0.142 & 89.6 \\ 
% 0.5 & 8& 59& 0.175 & 0.127 & 88.7\\
% 0.6 & 10& 50& 0.140 & 0.078 & 88.3\\ 
% \hline\end{tabular}
% \label{error-Flickr}
% \end{table}

% \begin{table}
% \centering
% \caption{Center, Focus and Spread of different camera brands}
% \begin{tabular}{|c|c|c|c|} \hline
% Camera Brand& Center (lat/lon) &Focus&Spread\\ \hline
% Minolta	& 39/-95 & 0.146 & 0.741 \\ \hline
% Sony & 42/-72 & 0.048 & 0.538 \\ \hline
% Canon & 51/-1 & 0.033 & 0.246 \\ \hline
% Olympus & 37/-122 & 0.023 & 0.283 \\ \hline
% Nikon & 23/114 & 0.064 & 0.342 \\ \hline
% Pentax & 43/-80 & 0.042 & 0.561 \\ \hline
% Samsung & 51/-1 & 0.118 & 0.943 \\ \hline
% Leica & 35/139 & 0.131 & 0.95 \\ \hline
% Kodak & 51/-1 & 0.042 & 0.628 \\ \hline
% Fujifilm & 24/121 & 0.143 & 0.967 \\
% \hline \end{tabular}
% \label{cameras}
% \end{table}

\subsection{Accuracy of Single-Term RFS Queries}
\label{exp:eval-RFS}
% \sara{I changed this section a little.}
The accuracy of single-term RFS queries is the same as that of Tsum, our baseline method for which evaluation is reported elsewhere~\cite{FaraziRafiei19a, space_saving}.
Knowing the error threshold $\epsilon$, we can define the number of center counters $m = \frac{1}{\epsilon}$ to capture all $\epsilon$-frequent locations for each term in the summary. As an example, to find all locations that appear $2$\% or more in the stream, one will need at least $50$ center counters, to have all the top locations with accuracy $100$\%. The accuracy of our multi-term RFS queries is reported in Section~\ref{exp:eval-multiterm}.

%We skip experiments on accuracy of RFS queries here, since the result is similar to that of the Tsum method [Cite Farazi et al.]. 
% To find $\epsilon$-frequent locations for a term, one needs at least $\frac{1}{\epsilon}$ center counters, as discussed in Section~\ref{struct}. We wanted to find out how both the accuracy and the space usage changes with $K$. Figure~\ref{k-accuracy} shows the mean accuracy of top-5 frequent locations for the query terms in the Twitter dataset varying the space usage.
% The top frequent location has an accuracy of near 100\% since with $1/0.02=50$ center counters, our algorithms (both Tsum and Ringsum) find all locations that are at least 0.02-frequent (see Theorem~\ref{thm:sizeBound}). For the same space, top-K locations for $K>1$ may
% not be exact since the location may not be 0.02-frequent. However, for smaller values of $\epsilon$ (which translates to a larger number of counters), the accuracy catches up for $K>1$.
% \begin{figure}
% \centering
% \includegraphics[width=3in,height=2in]{}
% \caption{Average accuracy of top-5 frequent locations (Twitter dataset)}
% \label{k-accuracy}
% \end{figure}

\subsection{Accuracy of Single-Term TFS Queries}
\label{exp:tfs}
In our next experiment, we employ our term distribution model to answer TFS queries. A question here is how good the model can predict the frequency of a term in an arbitrary location. It should be noted that this is a challenging task when
the frequency is not kept for all locations. For this experiment, we selected 100 query terms randomly and for each term, we further randomly selected 10 locations where the term had a non-zero frequency. The result was a set of 1000 term-location pairs or queries. The ground truth for each pair was the real frequency of the query term in the location, which ranged between 1 and 2000. Our experiment shows that the log of frequency estimates are not that far from that of the actual frequencies. The mean difference for Tsum+ and Ringsum is 0.22 and 0.39 respectively on the Twitter dataset and 0.18 and 0.25 on the Flickr dataset (see Table~\ref{time}). We measure the difference in the log scale to show the error in terms of the order of magnitude difference between estimates and actual values.
As a baseline for comparison, we also evaluated two variations of Tsum without using the probabilistic model. In one variation, the frequency of a query term in locations that were not stored in the summary was treated as zero, whereas in the other variation, the average frequency of the surrounding locations were used. 
The mean (standard deviation of) error in frequency estimates using these simple baselines were 37.4(37.4) and 29.7(32.6) respectively, compared to 16.9(19.6) for Ringsum and 15.5(17.4) for Tsum+.
Our probabilistic model, despite the error in some cases, performs better than our simple baselines for randomly selected query terms.

\subsection{Multi-Term query Performance}
\label{exp:eval-multiterm}
One more area where the probabilistic model is expected to do better is multi-term queries. To evaluate this hypothesis and to assess the performance of our methods, we conducted an experiment to measure the accuracy of top-k frequent locations for a set of multi-term queries. These queries were selected from the set of all two-terms that appeared at least once together in a location. We divided the multi-terms into four bins based on their frequencies:
(1) {\it Rare} included multi-terms that were rarely frequent or appeared together in a location. For example, ``fun girl'' was a rare multi-term.  
(2) {\it Medium} included multi-terms that were more likely to appear together in same locations, such as ``metal music.''
(3) {\it Frequent} included multi-terms that appeared together most of the time in same locations, such as ``high school'' and ``north Carolina.'' 
(4) {\it Very frequent} included multi-terms that were frequent in the stream and appeared together almost all the time. ``New York,'' and ``National Park'' are such multi-terms.

We randomly selected 100 multi-terms from each of the above 4 groups. Our evaluation included 5 algorithms, two were variations of Tsum+ and two were variations of Ringsum. One variation of both Tsum+ and Ringsum was under the assumption that the terms in a multi-term query are independent. Under this assumption, the joint probability of occurrences of a multi-term could be estimated as the product of the individual term probabilities. Another variation of both Tsum+ and Ringsum was under the assumption that the terms in a multi-term query are not independent. Under this assumption, the (upper-bound) joint probability was estimated as the minimum of the individual term probabilities. 
We also constructed a variation of Tsum as a baseline that did not use the probabilistic model. Without using the model, the only way to find the frequency of a multi-term was to take the intersection of individual Tsums 
%to find locations where all terms of a multi-term query appeared 
before selecting top-k location where multi-terms were frequent.

\begin{figure*}[tb]
\centering
\begin{minipage}{.3\textwidth}
%\centering
\hspace*{-0.5cm}
\includegraphics[width=2.35in]{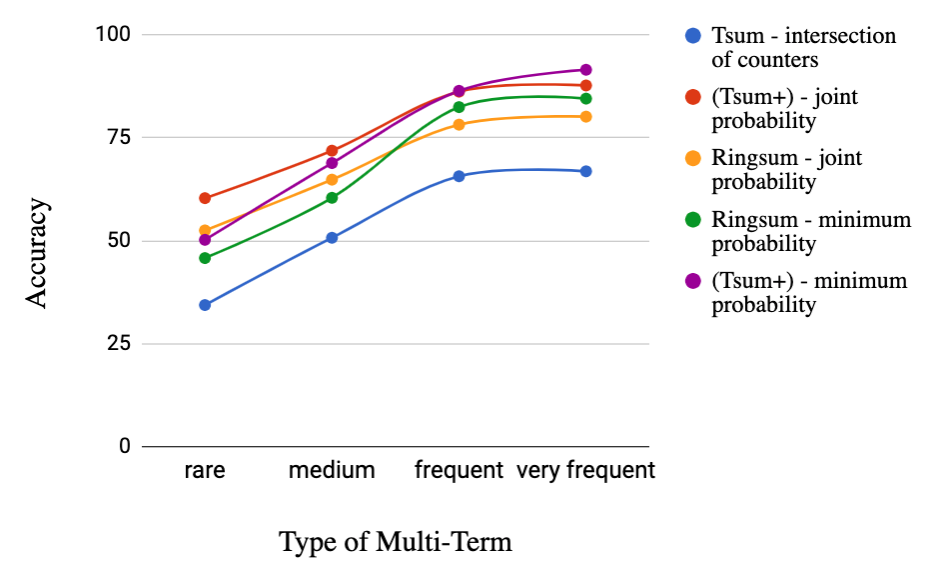}
\caption{Average accuracy of the top location for multi-term queries (RFS) on Twitter data}
\label{multi-term}
\end{minipage}
\hspace*{0.3cm}
\begin{minipage}{.3\textwidth}
%\centering
\hspace*{-0.5cm}
\includegraphics[width=2.35in]{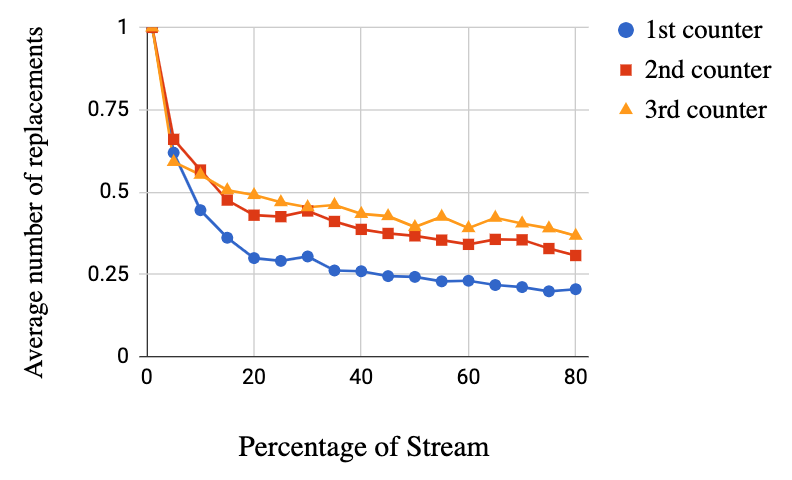}
\caption{The average number of replacements in the first top-location (center counters) of the summaries as more data is received}
\label{variation1}
\end{minipage}
\hspace*{0.3cm}
\begin{minipage}{.3\textwidth}
%\centering
\hspace*{-0.5cm}
\includegraphics[width=2.35in]{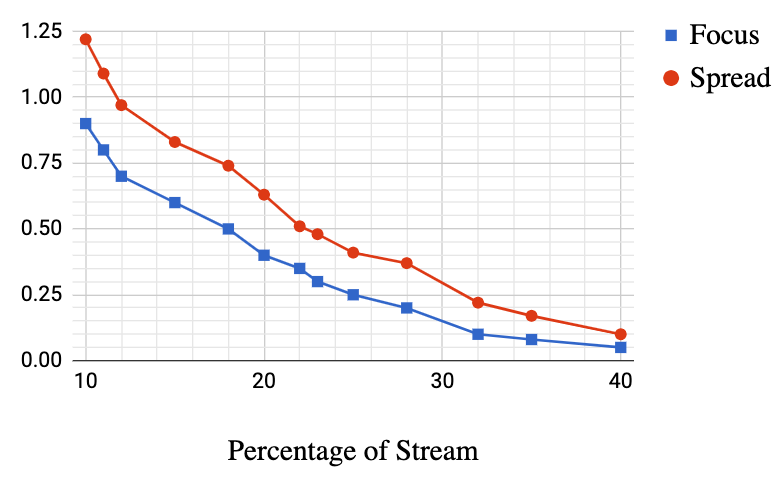}
\caption{The fraction of the stream that must pass (X axis) such that at least half of the centers remain the same for different value of focus and spread (Y axis)}
\label{alpha-c-rate}
\end{minipage}
\end{figure*}

% We need to change the legend in Fig 5 (multi-term.png), replacing "Tsum - joint probability" and "Tsum-minimum probability" with "Tsum+ - joint probability" and "Tsum+ - minimum probability"
Figure~\ref{multi-term} shows the accuracy of RFS queries for all 5 methods on the Twitter dataset, with queries selecting a top location for each multi-term. A few observations can be made here. First, we can see that the method that does not use the probabilistic model (Tsum intersection) has the least accuracy among all our methods. This is expected since top locations for a multi-term does not necessarily appear in the summary of each of the terms.
In other words, it pays off to use the probabilistic model since it allows us estimate the frequency in locations that are not kept in the summary.
Second, Tsum+ has a slight edge over Ringsum in terms of the accuracy of the results and this is consistent with our result on estimating the frequency of single-terms in a location (as discussed in Section~\ref{exp:tfs}).
Third, the independence assumption seems to work better for multi-terms that are rare or not frequent whereas a non-independent assumption gives more accurate results for multi-terms that are frequent or very frequent.
Finally the accuracy improves as queries become more frequent in a location. This is because such locations are more likely to be kept in individual term summaries, allowing more accurate estimates. 
%An example of this is shown in Table~\ref{cameras-multi} for a set of camera brands and the term ``lens'' that appear in the machine tags of photos in the Flickr dataset. The table gives the actual and estimated center where the term is most frequent as well as the actual and estimated frequency of the term in the center, using the ring summary method. The center is estimated correctly for frequent multi-terms ``sony lens'' and ``olympus lens'' and not correctly for non-frequent multi-terms ``fujifilm lens'' and `` minolta lens''. 

%\begin{table*} % Fixed
%\centering
%\caption{Top frequent location and estimated frequency in center (lat/lon) for multi-terms on Flickr data }
%\begin{tabular}{|c|c|c|c|c|c|} \hline
%Multi-Term&	Type&Real Center&Estimated Center& Frequency in Center&Estimated Frequency\\ \hline
%Sony Lens & very frequent & 42/-72&42/-72 &1491 & 288\\ \hline
%Olympus Lens &frequent & 37/-122&37/-122 &511 & 62\\ \hline
%Fujifilm Lens &medium  &24/121 &25/121 & 126 & 18\\ \hline 
%Minolta Lens &rare &40/-74& 39/-95& 21& 1\\
%\hline \end{tabular}
%\label{cameras-multi}
%\end{table*}

%\begin{figure}
%\centering
%\includegraphics[width=3.1in,height=1.9in]{multi-term.png}
%\caption{Average accuracy of the top location for multi-term queries (RFS) on Twitter data}
%\label{multi-term}
%\end{figure}

\subsection{Runtime}
\label{exp:eval-runtime}
In our next experiment, we evaluate Tsum+ and Ringsum in terms of their running times. For each method, we measure the time it takes to evaluate queries and to update the summary as the data streams in.
Our query time is the time to answer a TFS query (RFS queries are easier to evaluate and are not considered here).
Our update time is the time it takes to create and/or update the summary per one tweet (in Twitter dataset) or one photo (in Flickr dataset).
Our experiment shows that Ringsum is two orders of magnitude faster than that of Tsum+ in terms of query time whereas the update time for Tsum+ is faster (see Table~\ref{time}). Both Tsum+ and Ringsum build a term distribution model before answering a query. However, Tsum+ uses all locations in the geographical grid system to build its model, whereas Ringsum only uses the term frequency in the rings. The number of rings is much smaller than the number of grid cells and this explains the difference in query time.
On the other hand, there is overhead in updating the rings especially when a center moves and that is the reason updates take longer time using Ringsum. We next show that the update time of Ringsum can be reduced to the same level as that of Tsum+, using our improvement strategies, while keeping the query time still low.

\begin{table} %fixed
\small
\centering
\caption{Performance on TFS queries on Twitter (Flickr) datasets }
\begin{tabular}{|c|c|c|c|c|c|} \hline
Method &Update     &Query     & Log \\
       &Time (msec)&Time (sec)& Dist. Error\\ \hline
Tsum+& 0.068 (0.097) & 38.15 (22.46) &0.22 (0.18)\\ 
Ringsum &0.736 (0.853) & 0.23 (0.12) & 0.39 (0.25)\\ 
Light-Update&0.161 (0.242) & 0.23 (0.12) & 0.54 (0.37)\\ 
Fixed-Center&0.014 (0.018) &0.23 (0.12)& 0.34 (0.22)\\
Proximity-Aware & 0.804 (0.915) &0.23 (0.12)& 0.58 (0.37)\\
\hline \end{tabular}
\label{time}
\end{table}

\subsection{Evaluating Improvements}
\label{sec:exp-improvements}
In this set of experiments, we study the extent at which the improvements discussed in Section~\ref{section:possible_improvements} can boost the performance of the Ringsum method. As shown in our earlier experiments, Ringsum is very efficient in estimating term frequencies in locations and has an accuracy that is close to that of Tsum+. The improvements evaluated in this section target the update time of Ringsum, pushing it closer to (if not better than) the update time of
Tsum+.

Our first improvement looks at the possibility of fixing centers to save time during updates. Figure~\ref{variation1} shows the average number of replacements in the top-3 center counters as the volume of stream that is processed varies for the Flickr dataset. The more
volume of the stream is received, the fewer replacements are expected.
For example, after seeing about 40\% of the stream, less than 25\% of the centers are expected to be replaced. Based on this observation, a variation of Ringsum is to fix the centers after some fraction of the stream passes. In our experiments, fixing the centers after seeing 20\% of the stream not only reduced the update time significantly (as shown in Table~\ref{time}) but also led to a slightly less error.
In another experiment, to understand the relationship between the changes in centers for different values of focus and spread, we set to find the fraction of the stream that must be seen until at least half of the centers remain unchanged.
As shown in Figure~\ref{alpha-c-rate} for different value of $C$ and $\alpha$, 
events with larger spread require scanning a bigger portion of the stream before the centers become stable. This is because events with larger spread generally include more global events that may not be easily tied to a center. A similar pattern can be seen for focus because terms with a larger spread are generally expected to have lower frequencies at their centers.
For $C > 0.4$ and $\alpha > 0.6$, for example, at least 50\% of the centers are expected to be the same after passing 20\% of the stream. On the other hand,
fixing the centers after seeing 20\% of the stream significantly reduces the
relative error of $C$ and $\alpha$ in the probabilistic model (as shown in Table~\ref{improvements}) though the center accuracy can drop when some centers cannot be found.

We also evaluated our two other update strategies, namely light update and proximity-aware update.
Table~\ref{time} gives the time for different update strategies. We can see that light-update can achieve a stream processing time close to that of Tsum+.
Table~\ref{improvements} further shows that the relative error of $C$ and $\alpha$ for light update is more than that of Ringsum, but the update time is much faster.
We can also see that proximity-aware update reduces the error of $\alpha$ due to the less number of replacements in centers. However, center accuracy drops and we can no longer guarantee that all $\epsilon$-frequent locations can be found.

\begin{table} %fixed
\small
\centering
\caption{Error of focus, spread and center accuracy for improved alternative methods of ring summary }
\begin{tabular}{|c|c|c|c|} \hline
Alternate Method & Error of C & Error of $\alpha$ & Center Accuracy\\ \hline
 Fixed-Center & 0.160 & 0.055& 81.3\\ \hline
 Light-Update & 0.204 & 0.122 & 91.4\\ \hline
 Proximity-Aware & 0.175 & 0.082 & 68.2 \\ \hline
 Ringsum & 0.158 & 0.090 & 91.4 \\
\hline \end{tabular}
\label{improvements}
\end{table}

\section{Other Related Works}\label{section:related}

In addition to those reviewed earlier, the literature related to our work can be grouped into 
(1) spatio-temporal frequent term/trend queries.
(2) spatial distribution models of terms, and
(3) finding most frequent items in a data stream.

%\noindent\textbf{Spatio-temporal frequent terms/trends} 
\subsection{Spatio-Temporal Frequent Term Queries}
Spatio-temporal queries have been long studied (e.g., \cite{samet1990design,zhang2003location}), and more recent work has explored the relationship between spatio-temporal and term queries (e.g., \cite{mahmood2019scalable,de2008keyword}). There have been a few studies on finding top-k frequent or trending terms in a spatio-temporal region~\cite{blogscope,budak,garnet}. For example, Geoscope~\cite{budak2} gets a pair of location and term as input and determines whether the term is trending in the location within the current time window.
%AFIA~\cite{jensen:topk} uses 
Skovsgaard et al.~\cite{jensen:topk} use
a multi-layer grid cell index for storing top frequent terms in different time and space granularities. An extended version of SpaceSaving algorithm is used to keep track of the frequent terms in each cell.
%Venus~\cite{venus} store 
Magdy et al.~\cite{venus} store
the terms in a pyramid index in memory and answer top-k trend queries in a recent time interval, using different trending functions. 
%such as the keyword rate of count increase over time and the weighted count over recent time period. 
Garnet~\cite{garnet} uses a multi-layer grid cell for spatial indexes and a pyramid tree for temporal indexes, initially stored in main memory and periodically flushed to disk. Garnet partially supports reverse frequent term queries, by scanning through all possible grid cells and returning, if exists, the location that is most frequent for the given keyword. Clearly this approach is not efficient or scalable. 
Ahmed et al.~\cite{ahmed} use a balanced R-tree~\cite{rtree} index augmented by top-k sorted term lists, to support exact answers for top-k spatio-temporal range queries.
There have been also work on finding proximate objects or events, where given a term and a location, one wants to find other events that are similar and not that far from the given event~\cite{cao2012spatial, chen2013spatial}. 
%There has been a recent evaluation of some of the aforementioned index structures~\cite{almaslukh2018}.

The relevant literature also includes the work on reverse spatial queries (e.g., \cite{achtert2006efficient,lu2011reverse,gao2014efficient,vlachou2010reverse}), where objects that have the query object in their neighbourhoods are retrieved. As noted, the term ``reverse'' in this line of work has a slightly different meaning than that in our case.

%Farazi et al. introduce reverse frequent spatial term queries which given a term, the goal is to find top-K frequent locations for the term. They use SpaceSaving~\cite{space_saving} algorithm to track the top frequent locations for each term using a limited amount of memory. They also introduce reverse term frequency spatial queries, in which given a location and a term, they return an upper-bound estimation for the frequency of term in that location. However, their upper-bound frequency is a highly over-estimation of the term's real frequency in the given location for terms that have very frequent top locations.  

As briefly reviewed, different forms of spatial indexes are developed for efficiently finding terms in a location. However, these indexes are not efficient when searching for locations of a given event. They either store all locations and frequencies for a term or ignore the frequency estimation for locations that are not saved. 
To the best of our knowledge, the scalability and the efficiency of reverse term frequency spatial queries are not addressed in the literature. Also most existing works do not support or are not applicable to multi-term queries. %We propose a method to approximately answer both single-term and multi-term reverse frequent queries in a geo-tagged text stream.

%\noindent\textbf{Spatial distribution models} 
\subsection{Spatial Distribution Models}
Modeling the geographical distribution of various events and phenomena has been studied in different domains. The model used in our work, where each event is marked with a center and a spread, is quite general and explains many natural phenomena.
In earth sciences, studies of earthquakes present a model for the energy of waves that starts from a center and decays logarithmically as the waves spread further away from the center~\cite{earthquake1}. 
% In studies of hurricanes, where probability maps estimate the occurrence and the intensity of hurricanes in different geographical regions, the probability is often high in central areas and decreases proportional to the distance from the center, for the surrounding areas~\cite{hurricane}. 
Geographical distribution of species is modeled with exponential decay functions (such as Poisson, Gamma, etc.), which show the decay in population as the distance from their home habitat increases~\cite{species}. 
In human and social studies, the amount of human mobility and movement patterns~\cite{mobility}, population density~\cite{population}, economic growth~\cite{economy}, traffic~\cite{traffic}, etc. are all described with the same model, in which the population has a high density in downtown and commercial centers of urban areas, and as we move away, the density decays exponentially. 
%
%Our problem is similar to these studies in the sense that we also need to assign a geographical scope and a spatial distribution to each query term, in order to estimate the probability of the term appearing in locations for which we have not stored the term frequency. One general spatial distribution model is an event with a center and spread around that center. Natural or artificial constraints may alter the spread as we get further from the center of the event.
%
% Different distribution functions are developed for traffic in urban areas. For example, double Gaussian and exponential spatial distributions are considered~\cite{traffic}, where the density of traffic decreases as we get further away from the areas with high central busy streets.  
%
In social media, news about important events are generally focused on the location of the event. 
Spatial distribution modeling has been recently used in the context of terms and events extracted from the web.
Language models~\cite{serdyukov2009, kinsella2011} and generative models~\cite{yu2016geotagging, hong2012} have been used recently to assign a geo-center and a spread for terms used in web content. 
Finally, the probabilistic model of Backstorm et. al. \cite{spatial:variation} quantifies spatial variations of terms in Yahoo search engine query logs, indicating whether a term has a highly local interest or a broader regional or national appeal.

%For example, news and talks about a provincial election in Canada are centrally focused in that province. Popularity of different baseball teams in the US is centered in their home states and decreases in other neighbouring cities or states~\cite{baseball}.
%This spatial distribution modeling has been recently used in the context of terms and events extracted from the web. We can assign spatial models to terms used in news articles, social network websites, or query logs, meaning that their usage is the highest in a central location and decreases in locations that are further away from the center.  
%Earlier investigations of this issue used machine learning techniques to classify search queries as either local or global~\cite{ding, gravano}. 
%Wang et al.~\cite{wang2005} introduced methods for computing the geographical focus of web resources, based on their textual content, as well as the geographical distribution of hyperlinks to them~\cite{ding}. 
%Language models~\cite{serdyukov2009, kinsella2011} and generative models~\cite{yu2016geotagging, hong2012} have been used recently to assign a geo-center and a spread for terms used in web content. 
%Finally, Backstorm et. al. \cite{spatial:variation} propose a probabilistic framework for quantifying spatial variations for terms in Yahoo search engine query logs. Their model provides an estimate of a geographic center, as well as a measure of spatial dispersion for the terms, indicating whether it has highly local interest or broader regional or national appeal.

%\noindent\textbf{Finding most frequent items in a data stream} 
\subsection{Finding Most Frequent Items in a Data Stream}
This problem has been well-studied and two main approaches have emerged: (a) counter-based, and (b) sketch-based.
In a counter-based approach (e.g., \cite{demaine, karp,manku,space_saving}), items in a stream are counted using a fixed number of counters. When all counters are used, new items may be ignored or may be swapped with a currently monitored item.
%Major counter-based algorithms include frequency estimation~\cite{demaine, karp}, lossy counting~\cite{manku}, and SpaceSaving \cite{space_saving}.
Sketch-based approaches (e.g., \cite{charikar,cormode,cormode2}) estimate the frequency of items in a stream through a hashing mechanism, where items are mapped to a small set of counters, which are incremented with each occurrence of an item that is hashed to those counters. %CountSketch~\cite{charikar}, GroupTest~\cite{cormode2}, and Count-Min Sketch~\cite{cormode} are major sketch-based algorithms.
Counter-based techniques are more commonly used for point queries because of their relative error guarantees (e.g., \cite{jensen:topk,garnet}).

\section{Conclusions and Future Work}\label{section:conclusion}
We propose an efficient and scalable method for evaluating reverse spatial term queries in a stream of geo-tagged events. Our method is quite general, allowing the frequency of a given term at any given location to be estimated.
We extend the frequent item counting sketch Tsum with a probabilistic model that allows both TFS queries and multi-term RFS queries to be efficiently answered. We also develop the Ringsum method, which further improves the efficiency of our queries by orders of magnitude.
Our evaluation demonstrates that each of our methods shine in one or more areas of the accuracy of the results, the efficiency of queries, and the efficiency of updates, and our improvements can further boost the performance of Ringsum.

Our work can be extended or improved in a few interesting  directions.
As one possible direction, our parameter estimation, query processing and update algorithms may be studied over sliding windows with old data being removed as the time progresses. Another area for future research is managing the number of terms or events to be monitored. Maintaining a summary for every possible term or event is costly and selecting a subset of terms in advance for some domains may not be trivial. Exploring different grouping strategies and the areas where our summary structure may be improved to better support those groups is an interesting future direction.

Finally, we discussed a few variations of the ring summary method in Section~\ref{section:possible_improvements}, each improving over the base method under different settings. One may consider other variations of the method to account for different event types and term distributions when this
information is available in advance. Consider the scenario where
the center of an event moves over time. For example, a storm can
move from one state to another, and a disease may spread into new
regions. With some knowledge about such changes (e.g., an estimate
of the speed at which an event moves), a question is if better
update strategies can be devised.

\bibliographystyle{ACM-Reference-Format}
\bibliography{references.bib}

\end{document}